\newtheorem{corollary}{Corollary}
\newtheorem{lemma}{Lemma}
\newtheorem{defn}{Definition}
\newcommand{\RR}{\mathbb{R}}
\newcommand{\XX}{\mathbb{X}}
\newcommand{\DD}{\mathbb{D}}
\newcommand{\VV}{\mathbb{V}}
\renewcommand{\P}{\mathcal{P}}
\newcommand{\A}{\mathcal{A}}
\newcommand{\C}{\mathcal{C}}
\newcommand{\K}{\mathcal{K}}
\theoremstyle{thmstyleone}%
\theoremstyle{thmstyletwo}%
\newtheorem{example}{Example}%
\theoremstyle{thmstylethree}%
\begin{document}

\title[Article Title]{Flow primitives and infinitesimal generators of Perron-Frobenius and Koopman operators}


\author*[1]{\fnm{Phanindra} \sur{Tallapragada}}\email{ptallap@clemson.edu}

\affil*[1]{\orgdiv{Department of Mechanical Engineering}, \orgname{Clemson University}, \orgaddress{\city{Clemson},  \state{South Carolina}, \country{USA}}}


\abstract{The Koopman and the Perron-Frobenius operators are increasingly becoming popular in the control of complex nonlinear systems such as in a wide variety of robotics problems and flow control. This is in addition to the wide interest in the application of operator methods for better understanding of fluid flows.  A particular problem of relevance to all such applications is, how does the Koopman or the Perron-Frobenius (PF) operator change if the underlying vector field of the dynamical system undergoes small changes or if two vector fields are added. The current numerical methods rely on significant computations and model or parameter changes to the dynamical system often require all the computations to be repeated. This paper reports on a novel approach to the computation of the approximate PF and Koopman operators in such cases. The approach makes use of the exponentials of the infinitesimal generators of these operators. It is shown that this approximation depends on the Lie bracket of the vector field and the perturbation vector field. Examples are described where the Koopman and PF operators are constructed from operators of primitive flows and for cases where the model parameters undergo perturbations. }

\keywords{Perron-Frobenius operator, Koopman operator, Matrix exponentials, Lie Brackets}

\maketitle

\section{Introduction}
The Koopman and Perron Frobenius operator has become a popular tool in recent years for the analysis and control of complex dynamical systems with applications spanning robotics \cite{bruder_RAL_2021, haggerty_science_2023, shi_2024},  aerospace and power grid \cite{korda_ifac_2018} to name a few . This is in addition to the already popular applications in fluid mechanics ranging from microscale flows to geophysical fluid dynamics \cite{froyland_prl_2007, froyland2010transport, rowley_mezic_KMD_2009, mezic_afm_2013}. A significant reason for this popularity is that the dynamics of a nonlinear system can be represented by the linear infinite dimensional Koopman operator that propagates observable functions. Finite dimensional linear approximations of the operator can be constructed and used to develop linear (or bilinear) control systems. This framework can be useful even when the underlying mathematical model of the dynamical system is not fully known but measurements (of observables) is possible, allowing data driven construction of the Koopman operator and control of the system, \cite{korda2018linear,vaidya_cdc_2020, brunton_2021}. The Perron-Frobenius operator, which is the adjoint of the Koopman operator, propagates densities (of states) of a dynamical system and has been used extensively to understand transport in dynamical systems spanning ocean and atmospheric flows to microfluidics and more recently has been used to construct linear high dimensional control systems for manipulating collections of particles in microscale flows.

The current methods like DMD, EDMD and its variations, box counting for calculating the approximate Koopman or PF operators require significant computations with a large number of basis functions and initial conditions of the dynamical system \cite{rowley_mezic_KMD_2009, korda_mezic_2017, brunton22_modernkoopman}. When the underlying vector field of the dynamical system changes, weather due to changes in model parameters or changes to the model itself, these computations have to be performed again. A particular change to the vector field is the addition of another vector field. Specifically suppose $\dot{x} = g(x)$ and $\dot{x}=h(x)$ with $x \in \mathbb{R}^n$ are two dynamical systems and we consider the combined vector field $\dot{x}= g(x) + h(x)$. The PF operator $\P^t$ and the Koopman operator $\K^t$ for time $t$ for this combined system can be approximately computed based on the respective operators associated with the individual vector fields $g(x)$ and $h(x)$.  This approximation relies on the representation of these operators by the exponential of the infinitesimal generators. This approach is similar in spirit to the  Baker-Campbell-Hausdorff formula, \cite{kirillov_2017}. It is shown that this approximation is accurate upto $\mathcal{O}(t^3)$ and that the leading error term is determined by the nested Lie bracket $[(g-h), [g,h]]$.  

Several applications will benefit from this approach of calculating the PF and Koopman operators. One example is in determining and controlling the distribution of ensembles of particles in a planar microscale flow through spatially fixed rotors that can be placed at different locations in the fluid or switched on or off (the so called blinking rotors) that have been investigated in \cite{bt_cdc_2023, bt_pof_2024, bt_physicad_2025}. Suppose the velocity field generated by a single rotor located at $c_1 \in \mathbb{R}^2$ is $g(x)$ and the velocity field generated by a rotor at $c_2\in \mathbb{R}^2$ is $h(x)$ and the combined vector field due to both the rotors is $f=g+h$. This is a specific instance of a combination of flow primitives and the calculation of the transport of ensembles due to a combination of such rotors will be simplified by the proposed approach without requiring a recalculation of the operators. 

Another significance of the results in this paper relates to vector fields dependent on parameters, $\dot{x} = f(x;p)$. The question of the variation Koopman or Perron-Frobenius operators with respect to changes in parameters is important in applications such as robotics. Research in the modeling and control of robot dynamics is seeing increasing application of the Koopman operator framework \cite{bruder_RAL_2021, haggerty_science_2023, shi_2024}. If the parameters of a robot, such as its mass matrix, were to change, the Koopman operator will have to be calculated again. In control co-design the computational cost of this can be prohibitive. In this paper, it is first shown that linear scaling of a vector field leads to a time scaling of the Koopman and PF operators. Next it is shown that small changes in parameters of a system can be approximated as additive changes to a vector field and the Koopman and PF operators can be approximated by the exponentials of a sum of generators of this approximate vector field.

In several past papers the infinitesimal generator approach was used to calculate the Perron Frobenius operator or the Koopman operator. For example \cite{Froyland2024} used the Strang splitting method to efficiently calculate the Perron Frobenius operator and \cite{Froyland2020} considered the evolution of densities by a vector field perturbed by noise (a Wiener process). The observation that control affine systems have control affine Koopman generators have been made in several papers, see for example \cite{goswami_paley_2017, rowley_siam_2020, rowley_siam_2024}. In \cite{rowley_siam_2020, rowley_siam_2024} the Koopman infinitesimal generator approach was used to formulate a bilinear control in the lifted space of observables. The present paper builds on the idea of using infinitesimal generators to efficiently calculate the PF and Koopman operators for sums of vector fields using a modification of Strang splitting. A key new result is the proof that for a sum of vector fields, the operators commute of the vector fields commute. A series expansion for the operator computation is derived and numerical simulations are used to demonstrate these upto order $\mathcal{O}(t^3)$. More importantly the paper makes the connection of the approach of using infinitesimal generators to efficiently calculate the PF and Koopman operators in the case where a dynamical system has parameter variations.

\section{PF operator and infinitesimal generator}\label{sec:1}
Consider a dynamical system 
\begin{equation}\label{eq:vf}
    \dot{x} = f(x)
\end{equation}
with $x\in \mathcal{M} \subset \mathbb{R}^n$. The flow map generated by the vector field is given by the map $\Phi_t : \mathcal{M} \mapsto \mathcal{M}$ 
\begin{equation}\label{eq:flow}
    \Phi_t: x(0) \mapsto x(t).
\end{equation}
Let $(\mathcal{M}, \mathcal{B}, \mu)$ be the standard measure space on $\mathcal{M}$ with Lebesgue measure $\mu$. The Perron-Frobenius (PF) operator $\mathcal{P} : L^1(\mathcal{M},\mathcal{B}, \mu) \mapsto L^1(\mathcal{M},\mathcal{B}, \mu)$ is defined as 
\begin{equation}\label{eq:PF1}
   \int_B  \rho(t,x) d\mu =\int_B \mathcal{P}^t \rho(0,x_0) d\mu = \int_{\Phi^{-1}_t(B) } \rho(0,x_0) d\mu
\end{equation}
where $x_0 = \Phi^{-1}_t(x(t))$, $\rho \in L^1(\mathcal{M})$ and $B \in \mathcal{B}$; see for example \cite{lasota_1994, cvitanovic_chaosbook}. The Perron Frobenius operator propagates densities in time according to the flow of the dynamical system. It is therefore convenient to represent $\rho(x,t) \coloneqq \mathcal{P}^t \rho(x,0)$. Using the change of variables formula
\begin{equation}\label{eq:jacobian}
\int_{\Phi^{-1}_t(B) } \rho(0,x_0) d\mu = \int_B \rho(0,\Phi^{-1}_t(x)) |det(J^{-1}_t(x_0))| d\mu
\end{equation}
where $J_t(x_0) = \frac{d\Phi_t}{dx}$ represents the Jacobian of the flow map evaluated at $x_0 = x(0)$. Combining \eqref{eq:PF1} and \eqref{eq:jacobian} and putting $x_0 = \phi^{-1}_t(x(t))$
\begin{equation}\label{eq:density_evolve}
    \rho(t,x) = \frac{\rho (0,x_0)}{|det(J_t(x_0))|}.
\end{equation}
The right hand side of the \eqref{eq:density_evolve} evokes the formula for the integral of a delta distribution. Suppose $\delta(\cdot) \in L^1(\mathcal{M})$ denotes the Dirac-distribution such that $\int_B \delta(y)d\mu = 1$ if $x \in B$ and $0$ otherwise. Then right hand side of  \eqref{eq:density_evolve} can be rewritten as
\begin{equation}\label{eq:delta}
    \frac{\rho (0,x_0)}{|det(J_t(x_0))|}=  \int_B \delta(x-\Phi_t(x_0)) \rho(0,x_0) d\mu.
\end{equation}
The PF operator can now be rewritten using \eqref{eq:PF1} and \eqref{eq:delta} as 
 \begin{equation}\label{eq:PF1new} 
      \rho(t,x)=   \int_B \delta(x-\Phi_t(x_0)) \rho(0,x_0) d\mu \coloneqq \mathcal{P}^t \rho(0,x_0),
\end{equation}
i.e.
 \begin{equation}\label{eq:PF2}
     \mathcal{P}^t =   \delta(x-\Phi_t(x_0))
\end{equation}
with the propagation of a density represented compactly as $\rho(t) = \mathcal{P}^t \rho(0)$ by dropping the dependence of $\rho$ on $x$. The PF operator satisfies the semigroup property since $\mathcal{P}^{t+s} = \mathcal{P}^t\circ \mathcal{P}^s$ and satisfies the group property if the flow $\phi_t$ is invertible. This motivates a definition for the infinitesimal generator for the group, $\mathcal{A}$ defined as 
\begin{equation} \label{eq:gen}
    \mathcal{A} \rho = \lim_{t \to 0}\frac{(\mathcal{P}^t - \mathcal{I})\rho}{t}
\end{equation}
where $\mathcal{I}$ is the identity operator $\mathcal{I}\circ \rho = \rho$ for all $\rho\in L^1(\mathcal{M})$.
A formal definition for the action of the infinitesimal generator for the PF operator on $\rho \in L^1(\mathcal{M})$ can be given as \cite{lasota_1994}
\begin{equation} \label{eq:gen1}
    \mathcal{A} \rho = -\sum_{i=1}^n \frac{\partial (\rho f_i)}{\partial x_i}.
\end{equation}
The evolution of a density $f$ satisfies the continuity equation
\begin{equation}\label{eq:continuity}
    \left(\frac{\partial }{\partial t} - \mathcal{A} \right) \rho = 0
\end{equation}
where 
\begin{equation}
    \mathcal{A}  = -\frac{\partial f_i}{\partial x_i} -\sum_{i=1}^n f_i\frac{\partial}{\partial x_i} = -\nabla \cdot f - (f) \cdot \nabla.
\end{equation}
The solution to the continuity equation \eqref{eq:continuity} is  
\begin{equation}\label{eq:PF_exp_def}
 \rho(t) =    \mathcal{P}^t \rho(0) = e^{t\mathcal{A}} \rho(0)
\end{equation}

The $L^p$ adjoint to the PF operator is the Koopman operator   $\mathcal{K}_t : L^{\infty}(\mathcal{M}) \mapsto L^{\infty}(\mathcal{M})$ is defined as 
\begin{equation} \label{eq:koopman}
    \mathcal{K}_t \theta(x_0) = \theta(\Phi_t(x_0))
\end{equation}
for $\psi \in L^{\infty}(\mathcal{M})$ is called an observable function.
The infinitesimal generator for the Koopman operator is defined in a manner similar to that of $\mathcal{A}$,
\begin{equation}\label{eq:k_gen}
    \mathcal{C} \theta = \lim_{t \to 0}\frac{\mathcal{K}^t\theta - \theta}{t} = \lim_{t \to 0}\frac{\theta(\Phi_t)-\theta}{t} = \nabla \theta \cdot f.
\end{equation} 
Equation \eqref{eq:k_gen} is also the definition of the time derivative of the observable as
\begin{equation}\label{eq:k_pde}
\left(\frac{d}{dt} -\mathcal{C} \right) \theta(x(t)) = 0,
\end{equation}
the solution to which can be expressed as 
\begin{equation}\label{eq:k_exp_def}
    \theta(x(t)) = e^{t\mathcal{C}}\theta(x(0))
\end{equation}
where $\C = (f)\cdot\nabla$ is the infinitesimal generator. The infinitesimal generator of the Koopman operator can also be thought of as the Lie Derivative, $L_f$ of the obervable $\theta$, along the vector field $f$, which expressed in coordinates is $\nabla \theta \cdot f$,
\[
\frac{d \theta}{d t} = L_f \theta = \nabla \theta \cdot f.
\]

\section{Exponentials and Commutativity} \label{sec:exponentials}
A formal expansion of the exponentials in \eqref{eq:PF_exp_def} and \eqref{eq:k_exp_def} in terms of a power series  can be written giving  a power series definition for the PF and the Koopman operators. The PF operator is therefore
\begin{equation}\label{eq:Pf-exp}
\mathcal{P}^t = e^{t\mathcal{A}}= \sum_{k=0}^{\infty} \frac{t^k}{k!}\mathcal{A}^k,
\end{equation}
with terms like $\mathcal{A}^k$ defined recursively as $\mathcal{A}^k \rho = \mathcal{A} \mathcal{A}^{k-1} \rho$. For example
\[
\begin{split}
\mathcal{A}^2 &= (-\nabla \cdot f -\sum_{i=1}^n f_i\frac{\partial}{\partial x_i})\cdot(-\nabla \cdot f -\sum_{j=1}^n f_j\frac{\partial}{\partial x_j})\\
&=(\nabla \cdot f)^2 + 2(\nabla \cdot f)\sum_{i=1}^n f_i\frac{\partial}{\partial x_i}  + \sum_{i=1}^n\sum_{j=1}^n\left(f_if_j \frac{\partial^2}{\partial x_i \partial x_j} +  f_i \frac{\partial f_j}{\partial x_i}\frac{\partial }{\partial x_j}\right).
\end{split}
\]

Analogously for the Koopman operator
\begin{equation}\label{eq:k_exp}
    \mathcal{K}^t = e^{t\mathcal{C}} = \sum_{k=0}^{\infty} \frac{t^k}{k!}\mathcal{C}^k
\end{equation}
with terms like $\mathcal{C}^k$ defined recursively as $\mathcal{C}^k \theta = \mathcal{C} \mathcal{C}^{k-1} \theta$. For example
\[\begin{split}
\mathcal{C}^2 \theta &= \mathcal{C} \mathcal{C} \theta = \mathcal{C} \frac{\partial \theta}{\partial x_i} f_i = \frac{\partial }{\partial x_j} \left(\frac{\partial \theta}{\partial x_i} f_i\right) f_j\\
&= \sum_{i=1}^n\sum_{j=1}^n \left( \frac{\partial^2 \theta}{\partial x_i \partial x_j} f_i f_j + \frac{\partial \theta}{\partial x_i}\frac{\partial f_i}{\partial x_j} f_j \right)
\end{split}
\]

\subsection{Sum of vector fields}
Consider the dynamical systems $ \dot{x} = g(x)$ with $x \in \mathbb{R}^n$
with the flow map $\psi_t$ and  $\dot{x} = h(x)$, with $ x \in \mathbb{R}^n$ with the flow map $\phi_t$. The associated infinitesimal generators $\mathcal{A}_1 = -\nabla \cdot g -\sum_{i=1}^n g_i\frac{\partial}{\partial x_i}$  and $\mathcal{A}_2=-\nabla \cdot h -\sum_{i=1}^n h_i\frac{\partial}{\partial x_i}$ and Koopman infinitesimal generators are $\C_1 = g \cdot \nabla $ and $\C_2 = h \cdot \nabla$.
The vector field $f = g+h$ creates a new dynamical system 
\begin{equation}\label{eq:sum}
\dot{x} = f(x) = g(x) + h(x)
\end{equation}
with the flow map $\Phi_t$. The infinitesimal generator of the PF operator is
\begin{equation}\label{eq:sumA}
     \mathcal{A} = -\nabla \cdot(g+h)-\sum_{i=1}^n (g_i+h_i)\frac{\partial }{\partial x_i} \\
     = \mathcal{A}_1 + \mathcal{A}_2
\end{equation}
which is the sum of the infinitesimal generators due to the vector fields $g$ and $h$. Similarly the infinitesimal generator of the Koopman operator is
\begin{equation}\label{eq:sumC}
\C = f \cdot \nabla = (g+h) \cdot \nabla = \C_1 + \C_2.
\end{equation}

The PF  operator $\P$, and the Koopman operator $\K$  can be obtained by exponentiating the generators
\begin{equation}\label{eq:PFsum}
    \P^t = e^{t\mathcal{A}}= e^{t(\mathcal{A}_1+\mathcal{A}_2)}
\end{equation}
and
\begin{equation}\label{eq:Ksum}
    \K^t = e^{t\mathcal{C}}= e^{t(\mathcal{C}_1+\mathcal{C}_2)}
\end{equation}
In general the exponential $e^{t(\mathcal{A}_1+\mathcal{A}_2)} \neq e^{t\mathcal{A}_1}e^{t\mathcal{A}_2}$. 

\begin{lemma}
The PF operator associated with the dynamical system \eqref{eq:sum} is $\mathcal{P}^t = \mathcal{P}_1^t\mathcal{P}_2^t$  if $\mathcal{A}_1\mathcal{A}_2 = \mathcal{A}_2\mathcal{A}_1$. The Koopman operator associated with the dynamical system is $\mathcal{K}^t = \mathcal{K}_1^t\mathcal{K}_2^t$  if $\mathcal{C}_1\mathcal{C}_2 = \mathcal{C}_2\mathcal{C}_1$.
\end{lemma}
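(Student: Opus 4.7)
The plan is to establish the PF statement directly from the power series definition \eqref{eq:Pf-exp}, and then note that the Koopman statement follows by an identical argument applied to $\C_1, \C_2$ using \eqref{eq:k_exp}, since the only algebraic input used is commutativity of the two generators. First I would write $\P^t = e^{t\A} = \sum_{k=0}^{\infty} \frac{t^k}{k!}(\A_1+\A_2)^k$, making use of the decomposition \eqref{eq:sumA} of the combined generator. The candidate right-hand side $\P_1^t \P_2^t$ would be expressed as a Cauchy product $\sum_{k=0}^{\infty} \frac{t^k}{k!}\sum_{j=0}^{k} \binom{k}{j} \A_1^j \A_2^{k-j}$, obtained by formally multiplying the two exponential series and collecting equal powers of $t$.

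The key step is then to show that $(\A_1+\A_2)^k = \sum_{j=0}^{k} \binom{k}{j} \A_1^j \A_2^{k-j}$ under the hypothesis $\A_1\A_2 = \A_2\A_1$. This is the standard binomial expansion in a noncommutative algebra, which I would verify by induction on $k$: the base cases $k=0,1$ are immediate, and the inductive step follows from $(\A_1+\A_2)^{k+1} = (\A_1+\A_2)^k(\A_1+\A_2)$ by distributing and repeatedly using $\A_1\A_2=\A_2\A_1$ to collect all monomials of the form $\A_1^j\A_2^{k+1-j}$. Matching this termwise with the Cauchy product yields $\P^t = \P_1^t\P_2^t$, and replacing $\A_i$ by $\C_i$ gives the Koopman statement verbatim.

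The main obstacle is that $\A_1,\A_2$ (and $\C_1,\C_2$) are unbounded differential operators on $L^1(\mathcal{M})$ (resp.\ $L^{\infty}(\mathcal{M})$), so the binomial rearrangement and the interchange of summation order used to identify the Cauchy product with $e^{t(\A_1+\A_2)}$ must be interpreted at the level of action on a sufficiently regular dense subspace, for example smooth compactly supported densities or observables, where all the $\A_1^j\A_2^{k-j}\rho$ exist and the resulting series converges. At the formal power-series level consistent with the exponential representations \eqref{eq:PF_exp_def} and \eqref{eq:k_exp_def} adopted throughout the paper, no further ingredients are required: commutativity is exactly what allows both the binomial expansion and the termwise reordering, and hence both equalities claimed in the lemma.
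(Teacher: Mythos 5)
Your proposal is correct and follows essentially the same route as the paper: expand $e^{t(\A_1+\A_2)}$ as a power series, use the binomial identity $(\A_1+\A_2)^k=\sum_{m}\binom{k}{m}\A_1^m\A_2^{k-m}$ (valid precisely because the generators commute), and reorder the double sum into the Cauchy product $e^{t\A_1}e^{t\A_2}$, with the Koopman case handled identically. Your added caveat about interpreting the manipulations on a dense regular subspace of $L^1$ (resp.\ $L^\infty$) is a sensible refinement the paper leaves implicit, but it does not change the argument.
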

\begin{proof}
\[
\begin{split}
\mathcal{P}^t &= e^{t(\mathcal{A}_1+\mathcal{A}_2)} = \sum_{k=0}^{\infty}\frac{t^k}{k!}(\mathcal{A}_1+\mathcal{A}_2)^k \\
 &=  \sum_{k=0}^{\infty}\sum_{m=0}^{k}\begin{pmatrix}k\\m \end{pmatrix}\frac{\mathcal{A}_1^m\mathcal{A}_2^{k-m}}{k!}t^k \quad \textrm{if} \quad \mathcal{A}_1\mathcal{A}_2 = \mathcal{A}_2\mathcal{A}_1 \\
 &=\sum_{k=0}^{\infty}\sum_{m=0}^{k}\frac{\mathcal{A}_1^m\mathcal{A}_2^{k-m}}{m!(k-m)!}t^k = \sum_{k,m=0}^{\infty}\frac{(\mathcal{A}_1^mt^m)(\mathcal{A}_2^{k}t^k)}{m!k!}\\
 &=e^{t\mathcal{A}_1}e^{t\mathcal{A}_2}\\
 &=\mathcal{P}_1^t\mathcal{P}_2^t.
\end{split}
\]
The proof for the case of the Koopman operator follows along a similar line.
\end{proof}

\begin{defn}
Given two vector fields $g(x) \in \mathbb{R}^n$ and $h(x) \in \mathbb{R}^n$ and associated PF infinitesimal generators $\mathcal{A}_1 = (-\nabla \cdot g - g \cdot \nabla)$ and $\mathcal{A}_2 = (-\nabla \cdot h - h \cdot \nabla)$ respectively, the commutator  of the PF infinitesimal generators is defined as $[\mathcal{A}_1,\mathcal{A}_2] =\mathcal{A}_1\mathcal{A}_2 - \mathcal{A}_2\mathcal{A}_1$. If the associated Koopman infinitesimal generators are $\mathcal{C}_1 =  g \cdot \nabla$ and $\mathcal{C}_2 =  h \cdot \nabla$ respectively, the commutator  of the Koopman infinitesimal generators is defined as $[\mathcal{C}_1,\mathcal{C}_2] =\mathcal{C}_1\mathcal{C}_2 - \mathcal{C}_2\mathcal{C}_1$ .
\end{defn}

\begin{lemma}
The commutators $[\mathcal{A}_1,\mathcal{A}_2] = [g,h] \cdot \nabla$, where $[g,h]$ and $[\mathcal{C}_1,\mathcal{C}_2] = [g,h] \cdot \nabla$ where $[g,h]$ represents the Lie bracket of the vector fields $g$ and $h$.
\end{lemma}
\begin{proof} A direct calculation gives 
\begin{equation}
\begin{split}
[\mathcal{A}_1,\mathcal{A}_2] &=  (\nabla\cdot g)(\nabla \cdot h) + (\nabla \boldsymbol{\cdot} g) h \boldsymbol{\cdot} \nabla + g\boldsymbol{\cdot} \nabla(\nabla \boldsymbol{\cdot} h) + g \boldsymbol{\cdot} \nabla(h\boldsymbol{\cdot}  \nabla) \\
& -\left((\nabla\cdot g)(\nabla \cdot h) + (\nabla \boldsymbol{\cdot} h)g.\nabla + h \boldsymbol{\cdot} \nabla (\nabla \boldsymbol{\cdot} g) + h \boldsymbol{\cdot} \nabla (g \boldsymbol{\cdot} \nabla)\right) \\
& = g \boldsymbol{\cdot} \nabla(h\boldsymbol{\cdot}  \nabla) - h \boldsymbol{\cdot} \nabla (g \boldsymbol{\cdot} \nabla)\\
&=\sum_{i=1}^n\sum_{j=1}^n\left(g_ih_j \frac{\partial^2}{\partial x_i \partial x_j} +  g_i \frac{\partial h_j}{\partial x_i}\frac{\partial }{\partial x_j} - g_jh_i \frac{\partial^2}{\partial x_i \partial x_j} -  h_i \frac{\partial g_j}{\partial x_i}\frac{\partial }{\partial x_j} \right)\\
&=\sum_{i=1}^n\sum_{j=1}^n\left(g_i \frac{\partial h_j}{\partial x_i}\frac{\partial }{\partial x_j} -  h_i \frac{\partial g_j}{\partial x_i}\frac{\partial }{\partial x_j}\right) =[g,h] \boldsymbol{\cdot}   \nabla.
\end{split}
\end{equation}

The calculation for the commutator of the Koopman operators follows similarly 
\begin{equation}
\begin{split}
[\mathcal{C}_1,\mathcal{C}_2]  &=   \sum_{i=1}^n\sum_{j=1}^n \left( \frac{\partial^2 }{\partial x_i \partial x_j} g_i h_j + \frac{\partial }{\partial x_i}\frac{\partial g_i}{\partial x_j} h_j \right) - \sum_{i=1}^n\sum_{j=1}^n \left( \frac{\partial^2 }{\partial x_i \partial x_j} h_i g_j + \frac{\partial }{\partial x_i}\frac{\partial h_i}{\partial x_j} g_j \right) \\
&= \sum_{i=1}^n\sum_{j=1}^n \left( \frac{\partial }{\partial x_i}\frac{\partial g_i}{\partial x_j} h_j - \frac{\partial }{\partial x_i}\frac{\partial g_i}{\partial x_j} g_j   \right)\\
&=  [g,h] \cdot \nabla
\end{split}
\end{equation}
\end{proof}

\begin{corollary}
Two PF operators  $\mathcal{P}_1^t = e^{\mathcal{A}_1t}$ and $\mathcal{P}_2^t = e^{\mathcal{A}_2t}$ associated with the vector fields $g$ and $h$ commutte if and only if $[g,h]=0$.  Two Koopman operators $\K_1^t = e^{\C_1t}$ and $\K_1^t = e^{\C_2t}$ commute if and only if $[g,h] =0$.
\end{corollary}

\begin{example}
    Consider the vector fields
    \[
    g = \begin{pmatrix}
    x_2\\
    -\omega^2 x_1
    \end{pmatrix} \quad \quad \text{and} \quad \quad 
    h= \begin{pmatrix}
    -\frac{1}{\omega^2} x_2\\
    x_1
    \end{pmatrix}.
    \]
    The vector fields are divergence free, and the infinitesimal generators of respective PF operators are $\mathcal{A}_1 = g \boldsymbol{\cdot} \nabla$ and $\mathcal{A}_2 = h \boldsymbol{\cdot} \nabla$. The Lie bracket $[g,h]=0$; therefore $\mathcal{A}_1\mathcal{A}_2 = \mathcal{A}_2\mathcal{A}_1$ and the PF operator for the dynamical system $\dot{x} = g(x)+h(x)$ is $\mathcal{P}^t = e^{t\mathcal{A}_1}e^{t\mathcal{A}_2}$.
\end{example}

\begin{example}
    Consider the dynamical system $\dot{x} = g(x)+h(x) \in \mathbb{R}^2$ where the vector fields,
    \[
    g = \begin{pmatrix}
    -x_1-x_1x_2-x_1^3\\
    -x_2-x_2^2+x_1^2+x_1^4
    \end{pmatrix} \quad \quad \text{and} \quad \quad 
    h= \begin{pmatrix}
    -x_2\\
    x_1+3x_1x_2+x_1^3
    \end{pmatrix}.
    \]
   The vector fields are not divergence free; $\nabla \boldsymbol{\cdot} g \neq 0$ and $\nabla \boldsymbol{\cdot} h \neq 0$, but  a formal calculation shows $[g,h]=0$. Therefore the infinitesimal generators of both the PF and Koopman operators commute,
    \[ [\mathcal{A}_1, \mathcal{A}_2] = [g,h] \cdot \nabla = 0.
    \]
    and
    \[ [\C_1, \C_2] = [g,h] \cdot \nabla = 0.
    \]
    
    Hence the PF and Koopman operators of the dynamical system $\dot{x} = g(x) + h(x)$ are $\mathcal{P}^t = \P_1^t\P_2^t$ and $\K^t= \K_1^t\K_2^t$.
\end{example}

\subsection{Approximation of the PF and Koopman operators for the sum of vector field} \label{sec:approx}
Now suppose $\mathcal{A}_1\mathcal{A}_2 \neq \mathcal{A}_2\mathcal{A}_1$, then the PF operator $\mathcal{P}^t$ for the combined flow can be approximated in terms of the primitive PF operators, $\mathcal{P}_1^t$ and $\mathcal{P}_2^t$ using the exponential of the generator \eqref{eq:Pf-exp} in a manner similar to the well known Baker-Campbell-Hausdorff series and the Lie-Trotter product formula, \cite{trotter_1959, kirillov_2017}.
\[
\mathcal{P}^t = e^{t(\mathcal{A}_1+\mathcal{A}_2)} = 1 + t(\mathcal{A}_1+\mathcal{A}_2) + \frac{t^2}{2}\left( \mathcal{A}_1+\mathcal{A}_2\right)^2 + \frac{t^3}{6}\left(\mathcal{A}_1+\mathcal{A}_2\right)^3 + ...
\]
Next a direct calculation shows
\[
\begin{split}
\mathcal{P}^t_1\mathcal{P}^t_2  = e^{t\mathcal{A}_1}e^{t\mathcal{A}_2} &= 1 + t(\mathcal{A}_1+\mathcal{A}_2) + \frac{t^2}{2}\left( [\mathcal{A}_1, \mathcal{A}_2] + \left(\mathcal{A}_1+\mathcal{A}_2\right)^2 \right)   \\
&+\frac{t^3}{6}\left(\mathcal{A}_1+\mathcal{A}_2\right)^3 + \frac{t^3}{4}\left((\mathcal{A}_1+\mathcal{A}_2) [\mathcal{A}_1, \mathcal{A}_2] + [\mathcal{A}_1, \mathcal{A}_2](\mathcal{A}_1+\mathcal{A}_2)\right) \\
&+\frac{t^3}{48}\left([\mathcal{A}_1, [\mathcal{A}_1,\mathcal{A}_2]] + [\mathcal{A}_2, [\mathcal{A}_2,\mathcal{A}_1]]\right) + \mathcal{O}(t^4).
\end{split}
\]
Nested bracket of the generators are defined recursively; for example \[
[\mathcal{A}_1, [\mathcal{A}_1,\mathcal{A}_2]] = [(\nabla \cdot g + g \cdot \nabla),[(\nabla \cdot g + g \cdot \nabla), (\nabla \cdot h + h \cdot \nabla)]]
.\]
These nested brackets are once again related to nested Lie brackets of the vector fields, $g$ and $h$, for example
\begin{equation}\label{eq:nested}
[\mathcal{A}_1, [\mathcal{A}_1,\mathcal{A}_2]] = [ g \cdot \nabla,[g,h].\nabla] = [g,[g,h]]\cdot \nabla.
\end{equation}

The operators $\mathcal{P}^t_1\mathcal{P}^t_2$ and $\mathcal{P}^t_2\mathcal{P}^t_1$ can be combined as
\begin{equation}
\begin{split}
\frac{1}{2}\left(\mathcal{P}^t_1\mathcal{P}^t_2 + \mathcal{P}^t_2\mathcal{P}^t_1 \right) &= \frac{1}{2}\left(e^{t\mathcal{A}_1}e^{t\mathcal{A}_2} + e^{t\mathcal{A}_2}e^{t\mathcal{A}_1}\right) \\
&=1 + t(\mathcal{A}_1+\mathcal{A}_2) + \frac{t^2}{2}\left( \mathcal{A}_1+\mathcal{A}_2\right)^2 + \frac{t^3}{6}\left(\mathcal{A}_1+\mathcal{A}_2\right)^3+ \\
&+\frac{t^3}{48}\left([\mathcal{A}_1, [\mathcal{A}_1,\mathcal{A}_2]] + [\mathcal{A}_2, [\mathcal{A}_2,\mathcal{A}_1]]\right) + \mathcal{O}(t^4),
\end{split}
\end{equation}
yielding
\begin{equation} \label{eq:P_cubic}
   \mathcal{P}^t =\frac{1}{2}\left(\mathcal{P}^t_1\mathcal{P}^t_2 + \mathcal{P}^t_2\mathcal{P}^t_1 \right)  + \mathcal{O}(t^3).
\end{equation}

The Koopman operator can be approximated similarly as
\begin{equation}\label{eq:K_cubic}
    \mathcal{C}^t = \frac{1}{2}\left(\mathcal{C}^t_1\mathcal{C}^t_2 + \mathcal{C}^t_2\mathcal{C}^t_1 \right)  + \mathcal{O}(t^3).
\end{equation}

\subsection{Perturbation of parameters}\label{sec:params}

First we will consider the vector field
\begin{equation}\label{eq:scale}
\dot{x} = \epsilon f(x)
\end{equation}
where $\epsilon \in \mathbb{R}$. This is a linear scaling of the vector field $f(x)$. Suppose  $\C_1 = (\nabla)\cdot f$ and $\P_1 = -\nabla \cdot f - (f) \cdot \nabla$ which are the generators of the Koopman and PF operators for the vector field $\dot{x} = f(x)$. The Koopman and PF generators for the scaled vector field \eqref{eq:scale} are $\C_2 = \epsilon\C_1$ and $\A_2 = \epsilon \P_1$ respectively.
The PF operator associated with \eqref{eq:scale} is
\begin{equation}
    \P_2^t = e^{t\A_2} = e^{\epsilon t\A_1} = e^{\tau \A_1} = \P_1^\tau
\end{equation}
where $\tau = \epsilon t$ is a scaled time. Similarly the Koopman operator associated with \eqref{eq:scale} is $\C_2^t = \C_1^\tau$.

Next suppose a vector field $f(x;p) \in \mathbb{R}^n$ depends on the parameters $p \in \mathbb{R}^m$, 
\begin{equation}
    \dot{x} = f(x;p)
\end{equation}
with  nominal values of the parameters being $p_0 \in \mathbb{R}^m$ and the nominal vector field being $\dot{x} = f(x;p_0)$. We consider a change to the parameters $p = p_0 + \epsilon q$ where $\epsilon \in \mathbb{R}$ and $q \in \mathbb{R}^m$. Further assuming that the vector field varies smoothly with respect to the parameters, 
\[
f(x;p) = f(x;p_0) + \epsilon q_j\frac{\partial f_i(x;p_0)}{\partial p_j} + \mathcal{O}(\epsilon^2).
\]
Putting $h(x) = q_j\frac{\partial f_i(x;p_0)}{\partial p_j}$ and $g(x) = f(x;p_0)$, one retrieves the model of \eqref{eq:sum}. If the vector field $f$ is divergence free, then so is the vector field $g(x)= f(x;p_0)$. The vector field $h(x) = q_j\frac{\partial f_i(x;p_0)}{\partial p_j}$ is also divergence free, since $\nabla \cdot h = \frac{\partial h_i}{\partial x_i} = \frac{\partial}{\partial x_i}\left(q_j\frac{\partial f_i(x;p_0)}{\partial p_j} \right) = q_j \frac{\partial}{\partial p_j}\left( \frac{\partial f_i(x;p_0)}{\partial x_i}\right) = 0$. 

Suppose the PF and Koopman operators associated with the vector field $g(x;p_0)$ are $\mathcal{P}_1^t$ and $\mathcal{C}_1^t$ and the operators associated with the vector field $h(x;q,p_0)$ are $\mathcal{P}_2^t$ and $\mathcal{C}_2^t$ then the PF operator associated with the vector field $f(x;p)$ is
\begin{equation}\label{eq:PF_param}
    \mathcal{P}^t \approx \frac{1}{2}\left(\mathcal{P}_1^t\mathcal{P}_2^{\epsilon t} + \mathcal{P}_2^{\epsilon t} \mathcal{P}_1^t\right).
\end{equation}
The Koopman operator associated with the vector $f(x;p_0)$ is 
\begin{equation}\label{eq:Koopman_param}
    \mathcal{C}^t \approx \frac{1}{2}\left(\mathcal{C}_1^t\mathcal{C}_2^{\epsilon t} + \mathcal{C}_2^{\epsilon t} \mathcal{C}_1^t\right).
\end{equation}

\section{Examples  }\label{sec:examples}
The results in section \ref{sec:exponentials} are agnostic to the specific numerical technique to compute the PF and Koopman operators. In this section, examples are presented where the operators are calculated using the the EDMD technique, which is briefly reviewed here.

The Perron-Frobenius operator is calculated by first calculating the Koopman operator using the extended dynamic mode decomposition with a dictionary of basis functions and making use of the adjoint relation between the two operators, as in \cite{klus_jcd_2016}. This method is briefly reviewed here. Suppose  $\DD$ is a dictionary of  $k$ basis functions 
\[
\DD = \{\psi_1, \psi_2, \dots, \psi_k\}
\]
and define $\VV$ as the linear space spanned by $\DD$.  Each $\psi_i$ is a scalar-valued function acting on the state space, i.e. $\psi_i:\mathcal{M} \mapsto\RR$ for $i = 1, \dots, k$, where $\XX$ is the state space. An approximation of the observable $\theta$ is by a linear combination of these basis functions 
\begin{equation}
g(x) \approx \sum_{i = 1}^k \psi_i(x)c_i = \Psi^T(x) c = c^T\Psi(x)
\end{equation}
where $\Psi:\XX\mapsto\RR^k$ is a column-vector valued function where the elements are given by $[\Psi(x)]_i = \psi_i(x)$, and $c\in\RR^k$ is a column vector of coefficients. If the flow map of the dynamical system is $\Phi_t:\mathcal{M} \mapsto \mathcal{M} $, the observable is propagated by the Koopman operator $\mathcal{K}$ and approximated by the projected action of a finite dimensional operator $K$ on the space $\VV $.
\begin{align}
\theta(\Phi_t(x)) &= \K^t \theta(x) \approx [\K^t(c^T\Psi)](x) \approx c^TK_t^T\Psi(x)
\label{eq:Kapproxdef}
\end{align}
To construct the matrix $K$, a set of initial state values $x_i$ distributed uniformly in the domain are chosen and these are evolved forward by time $t$ , $y_i = \Phi_t(x_i)$ for $i = 1, \dots, m$.  We define (snapshot) matrices 
\begin{equation*}
X = 
\begin{bmatrix} 
x_1 & ~\cdots~ & x_m 
\end{bmatrix}
\qquad 
\text{and} 
\qquad 
Y = 
\begin{bmatrix} 
y_1 & ~\cdots~ & y_m 
\end{bmatrix}
\end{equation*}

Using Eq. \eqref{eq:Kapproxdef} along with the relationship $y_i = \Phi(x_i)$, for $i = 1, \dots, m$, $\theta(y_i) \approx \Psi^T(y_i) c = \Psi^T(x_i) K c$. Usually $i>k$, i.e. we choose more measurements than basis functions and so the problem is overdetermined and we compute $K$ as the solution to the least squares problem 
\begin{equation}
    K = \arg \min_K \|\Psi_Y - K^T\Psi_X\|^2_F
\end{equation}
where $\|\cdot\|_F$ is the Frobenius norm.  The analytical solution of this problem is 
\begin{equation}
    K^T = \Psi_Y\Psi_X^\dagger
\end{equation}
where $\Psi_X^\dagger$ is the pseudoinverse of $\Psi_X$.

The Perron-Frobenius operator is adjoint to the Koopman operator, 
\begin{equation}
    \int_{\XX} [\K \theta](x) \rho(x) dx = \int_{\XX} \theta(x) [\P \rho](x)dx.
    \label{eq:adjoint}
\end{equation}

If the density function $\rho$ is also expressed as a linear combination of the same basis functions, $rho(x) \approx \sum_i \psi_i(x) b_i = \Psi^T(x) b = b^T\Psi(x)$, the the action of $\mathcal{P}$ on $\rho$ is approximated by the action of the finite dimensional PF operator $P_t$ on the basis functions;   
\[
\P[b^T\Psi](x) \approx b^TP^T\Psi(x) = \Psi^T(x)Pb
\]
Substituting these approximations into Eq. \eqref{eq:adjoint}, 
\begin{equation}
    \int_{\mathcal{M}} c^TK^T\Psi(x) \Psi^T(x)b \, dx 
    = 
    \int_{\mathcal{M}} c^T\Psi(x) \Psi^T(x)Pb\,dx.
\end{equation}
Since the domain of integration is the same, the integrands must be equal, and so, for each of the data points, $x_i$ in the snapshot matrices, we should have
\[
c^TK^T\Psi(x_i) \Psi^T(x_i)b  
    =  
c^T\Psi(x_i) \Psi^T(x_i)Pb.
\]

So, the matrix approximation $P$ of the Perron-Frobenius operator $\P$ is given by the solution of the least squares problem 
\begin{equation}
    P = \arg \min_{P} \left\| \Psi_Y\Psi_X^T - \Psi_X\Psi_X^TP\right\|^2_F
\end{equation}
which has the analytical solution 
\begin{equation}
    P = (\Psi_X\Psi_X^T)^\dagger\Psi_Y\Psi_X^T
\end{equation}
where $\|\cdot\|_F$ is the Frobenius norm and $(\cdot)^\dagger$ is the pseudoinverse.

\subsection{Flow primitives and PF operator} \label{sec:rotlet}
Consider the following vector field composed of two vector fields $g(x_1,x_2) + h(x_1,x_2)$
\begin{equation}\label{eq:rotlet}
\begin{pmatrix}
    \dot{x}_1 \\
    \dot{x}_2
\end{pmatrix} 
= \underbrace{\begin{pmatrix}
- \frac{\Gamma_1 (y-y_1)(2r_1^2 + \epsilon^2)}{(r_1^2 + \epsilon^2)^{5/2}}\\
  \frac{\Gamma_1 (x-x_1)(2r_1^2 + \epsilon^2)}{(r_1^2 + \epsilon^2)^{5/2}}
\end{pmatrix}}_{g(x_1,x_2)} + 
\underbrace{\begin{pmatrix}
- \frac{\Gamma_2 (y-y_2)(2r_2^2 + \epsilon^2)}{(r_2^2 + \epsilon^2)^{5/2}}\\
  \frac{\Gamma_2 (x-x_2)(2r_2^2 + \epsilon^2)}{(r_2^2 + \epsilon^2)^{5/2}}
\end{pmatrix}}_{h(x_1,x_2)}
\end{equation}
where $r_1^2 = (x-x_1)^2+(y-y_1)^2$ and $r_2^2 = (x-x_2)^2+(y-y_2)^2$.
The vector field $g$ is generated by a rotlet located at $(x_1,y_1)$ with strength $\Gamma_1$ and $h$ is generated by a rotlet located at $(x_2,y_2)$ with strength $\Gamma_2$. The rotlet fields are regularized by the small parameter $\epsilon$ to avoid singularities in the velocity field \cite{ainley_jcp_2008, cortez_jcp_2015}. Transport of densities in a planar fluid flow at low Reynolds number is an important problem that has been addressed using differential dynamic programming in \cite{bt_cdc_2023, bt_pof_2024, bt_physicad_2025}. This is an example of a flow composed of flow primitives where the evolution of an initial density $\rho_0$ (of tracer particles) is of interest. This evolution of the density is calculated using the action of the PF operator on $\rho_0$. Using the EDMD technique a dictionary consisting of  radial basis functions is created with their means evenly distributed in a grid spanning $-4<x<4$ and $-4<y<4$. These basis functions are evaluated on a grid of $161\times161$ points evenly spaced within $-4\leq x \leq 4$ and $-4\leq y \leq 4$. The vector field \eqref{eq:rotlet} is integrated for these initial conditions for a time period $T=1$. The semi group of PF and Koopman are calculated at intervals of $t=0.1$. These numerics are performed for the combined vector field $f$ in \eqref{eq:rotlet} and separately for the vector fields $g$ and $h$ giving the operators $P^{nt}$ associated with the vector field $f$ and $P_1^{nt}$, $P_2^{nt}$ associated with the vector fields $g$ and $h$ respectively, for $n = 0, 1, ..., 10$.

\begin{figure}
\begin{center}
    \includegraphics[width=1\linewidth]{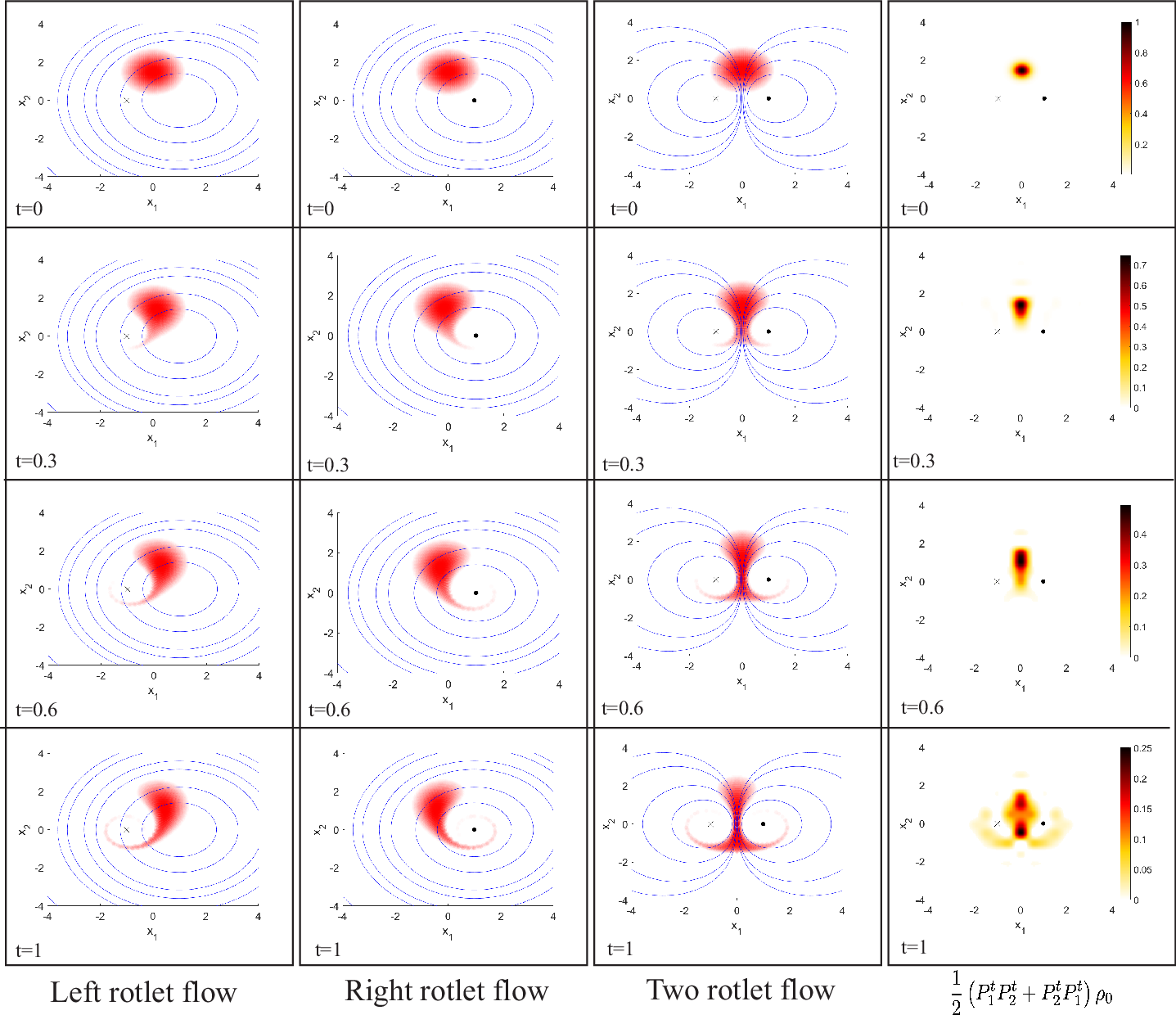}    
    \caption{The first three columns of figures show the evolution of a blob of tracers. (a) The first column shows the evolution of a  blob of tracers  due to the vector field $g$ produced by a single rotlet at $(1,0)$, (b) the second column shows the evolution of a  blob of tracers  due to a vector field $h$ produced by a single rotlet at $(-1,0)$, (c)  the third column shows the evolution of a  blob of tracers  due to a vector field $g+h$ produced by two rotlets at $(-1,0)$ and $(1,0)$ respectively. (d) The fourth column shows the evolution of a (gaussian) density function with its center initially located at the initial center of the blob. This density is propagated by $\frac{1}{2}(P_1^tP_2^t + P_2^tP_1^t)$}
    \label{fig:rotlet}
    \end{center}
\end{figure}

The first column in fig. \ref{fig:rotlet} shows the propagation of a sample density by the flow of $\dot{x} = g$ due to a rotlet at $(1,0)$ with strength $\Gamma_1=1$ and the second column in fig. \ref{fig:rotlet} shows the propagation of the same initial density by  the flow of $\dot{x} = h$ due to a rotlet at $(-1,0)$ with strength $\Gamma_2=1$. The third column in fig. \ref{fig:rotlet} shows the propagation of the same initial density by the flow of $\dot{x} = g+h$ due to two rotlets at $(1,0)$ and $(-1,0)$ with strength $\Gamma_1=1$ and $\Gamma_2=1$. The evolutions of these initial densities is obtained by a direct numerical integration of $1600$ initial conditions sampled by gaussian function. These respresent the evolution of a blob of tracer particles by the three flows respectively. The last column shows the propagation of the same density by the approximate operator $\frac{1}{2}(P_1^tP_2^t + P_2^tP_1^t)$. The evolution of the density by combining the individual PF operators using \eqref{eq:P_cubic} yields accurate results, comparable to the direct integration of a blob of tracers, for integration times as long as $t=1$.


\subsection{Examples - Perturbation of parameters} \label{sec:examples2}
Consider the vector field $g(x_1,x_2;p) \in \mathbb{R}^2$ 
\begin{equation}\label{eq:pendulum}
\begin{pmatrix}
    \dot{x}_1 \\
    \dot{x}_2
\end{pmatrix} 
= {\begin{pmatrix}
 x_2\\
  -p_0^2\sin{x_1}
\end{pmatrix}}
\end{equation}
governing the motion of a simple pendulum, with the parameter $p=p_0$. If the parameter  is perturbed to $p = p_0+\epsilon$, then the vector field can be approximated as
\begin{equation}\label{eq:pen_approx}
\begin{pmatrix}
    \dot{x}_1 \\
    \dot{x}_2
\end{pmatrix} =
\underbrace{\begin{pmatrix}
 x_2\\
  -(p_0+\epsilon)^2\sin{x_1}
\end{pmatrix}}_{f(x_1,x_2; p)}
 \approx
\underbrace{\begin{pmatrix}
 x_2\\
  -p_0^2\sin{x_1}
\end{pmatrix}}_{g(x_1,x_2; p_0)} + \underbrace{\begin{pmatrix}
 0\\
  -2\epsilon p_0\sin{x_1}
\end{pmatrix}}_{h(x_1,x_2)}.
\end{equation}
Around a nominal value of the parameter $p_0$, the vector field is $g(x_1,x_2,p_0)$. If the parameter $p=p_0+\epsilon$, then $h_i = \epsilon \frac{\partial f_i(x;p_0)}{\partial p}$. The propagation of a density or an observable by this dynamical system with the perturbed parameters can be calculated using \eqref{eq:PF_param}. 

Figure  \ref{fig:pendulum} shows the propagation of an initial density $\rho(x_1,x_2)$ (a gaussian) due to the flow generated by the vector field $f(x;p)$ \eqref{eq:pen_approx}. The first column shows the density at $t=0$ and its evolution at $t=0.5$ and $t=1$ obtained by a direct integration of $1450$ initial conditions sampled using $\rho(x_1,x_2)$. The second column shows the propagation of the density $\rho$ using the PF operator associated with $f(x;p)$. The third column shows the propagation of the density using the PF operator $\frac{1}{2}(P_1^tP_2^t + P_2^tP_1^t)$ where $P_2^t$ is the PF operator associated with $\epsilon h(x;p_0)$.
The PF operators were calculated using the EDMD method described in \ref{sec:rotlet}, with 225 radial basis functions with their means evenly distributed in a grid spanning $-4<x<4$ and $-4<y<4$.

\begin{figure}[!h]
\begin{center}
    \includegraphics[width=0.95\linewidth]{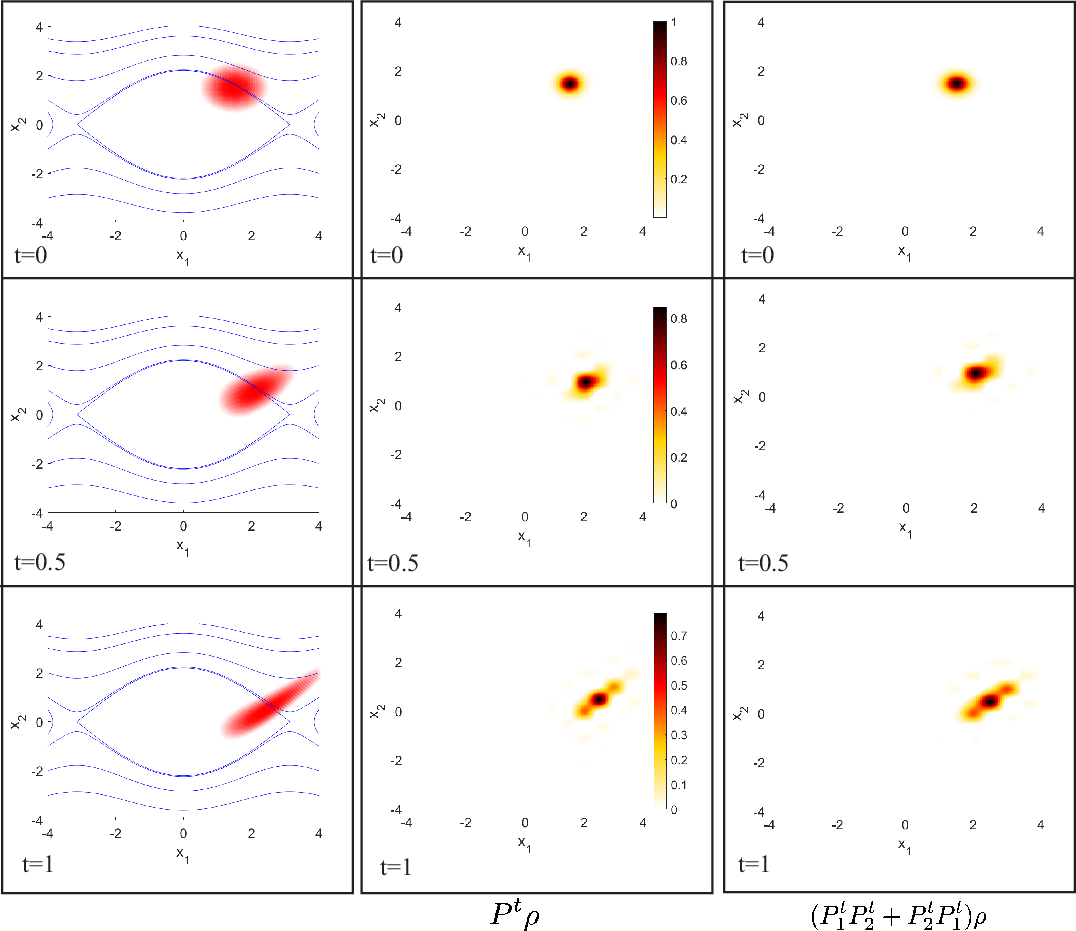}    
    \caption{(a) The first column shows the evolution of a  blob of tracers  due to the exact the vector field $f(x;p)$ \eqref{eq:pen_approx}. (b) The second column shows the propagation of a  gaussian density function by the PF operator $P^t$ due to the vector field $g+h$. (c)  The third column shows the propagation of a  gaussian density function by the PF operator $\frac{1}{2}(P_1^tP_2^t + P_2^tP_1^t)$ where the operator $P_1^t$ is due to vector field $g$ and the operator $P_2^t$ due to the vector field $h$. The parameter values are $p_0=1$ and $\epsilon = 0.1$.}
    \label{fig:pendulum}
    \end{center}
\end{figure}

A second example further demonstrates the computation of the PF operator for a dynamical system with parameters that are perturbed. Consider the vector field of the Duffing oscillator with parameter $p = p_0 + \epsilon$,
\begin{equation}\label{eq:duffing}
\begin{pmatrix}
    \dot{x}_1 \\
    \dot{x}_2
\end{pmatrix} 
= \underbrace{\begin{pmatrix}
 x_2\\
 x_1 + (p_0 + \epsilon) x_1^3
\end{pmatrix}}_{f(x_1,x_2;p)} =
\underbrace{\begin{pmatrix}
 x_2\\
   x_1 + p_0 x_1^3
\end{pmatrix}}_{g(x_1,x_2;p_0)}
+
\underbrace{\begin{pmatrix}
 0\\
  \epsilon x_1^3
\end{pmatrix}}_{h(x_1,x_2; \epsilon )}.
\end{equation}
Suppose $p_0=0$ and $\epsilon <0$; then the vector field $g$ has a saddle fixed point at $(0,0)$ around which almost all trajectories diverge. The vector field $h$ has  degenerate set of fixed points $x_1=0$. The  PF operator $\mathcal{P}^t$ associated with the vector field $f(x;p)$ can be approximated by $\frac{1}{2}\left(P_1^tP_2^t+P_2^t P_1^t \right)$ where $\mathcal{P}_1^t$ is associated with the vector field $g(x;p_0)$ and $\mathcal{P}_2^t$ is associated with $h(x;\epsilon )$. Figure fig. \ref{fig:duffing} shows the propagation of an initial density $\rho(x_1,x_2)$ (a gaussian) due to the flow generated by the vector field $f(x;p)$ \eqref{eq:pen_approx}. The first column shows the density at $t=0$ and its evolution at $t=0.5$ and $t=1$ obtained by a direct integration of $1450$ initial conditions sampled using $\rho(x_1,x_2)$. The second column shows the propagation of the density $\rho$ by $\mathcal{P}^t$. The third column shows the propagation of the density using the PF operator $\frac{1}{2}(P_1^tP_2^t + P_2^tP_1^t)$. The PF operators were calculated using the EDMD method described in \ref{sec:rotlet}, with 225 radial basis functions with their means evenly distributed in a grid spanning $-4<x<4$ and $-4<y<4$.

\begin{figure}[!h]
\begin{center}
    \includegraphics[width=0.9\linewidth]{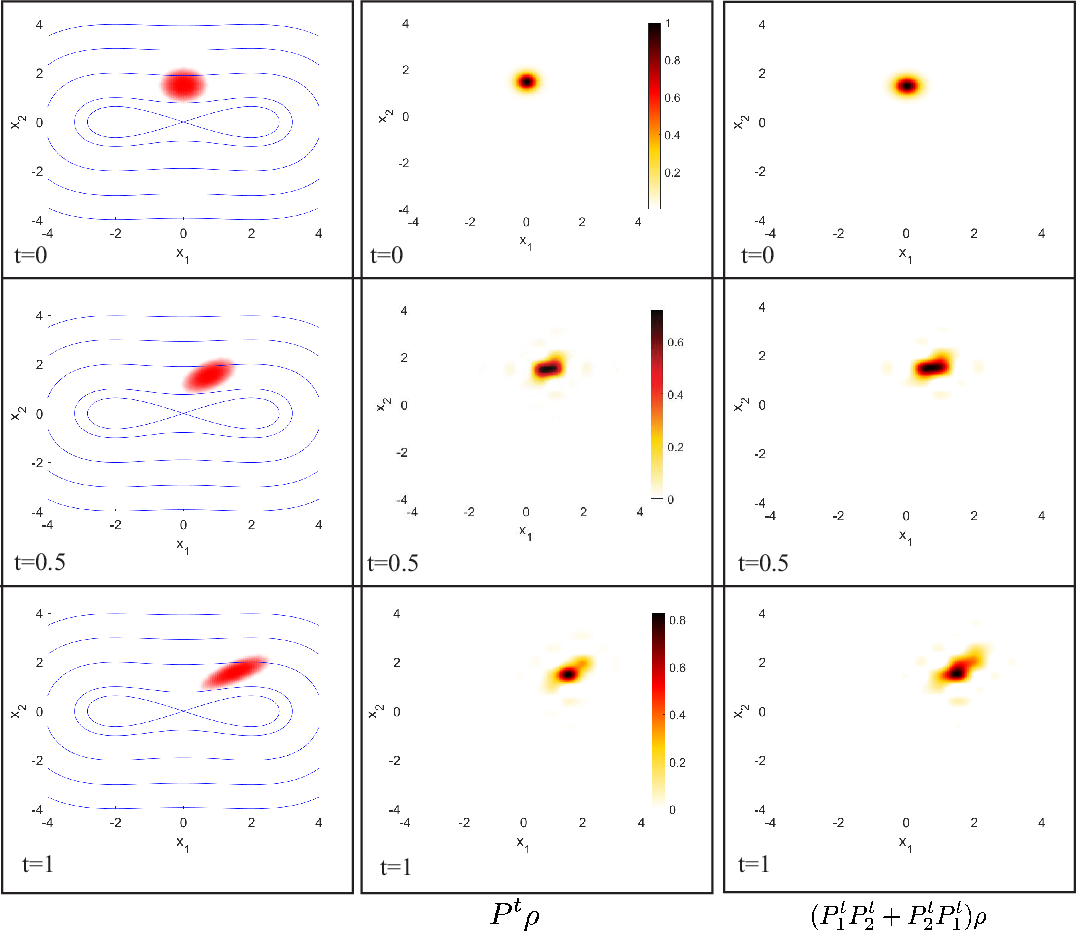}    
    \caption{(a) The first column shows the evolution of a  blob of tracers  due to the vector field $g+h$ \eqref{eq:duffing}. (b) The second column shows the propagation of a  gaussian density function by the PF operator $P^t$ due to the vector field $g+h$. (c)  The third column shows the propagation of a  gaussian density function by the PF operator $\frac{1}{2}(P_1^tP_2^t + P_2^tP_1^t)$ where the operator $P_1^t$ is due to vector field $g$ and the operator $P_2^t$ due to the vector field $h$.}
    \label{fig:duffing}
    \end{center}
\end{figure}

In theory it is possible to compute the PF and Koopman operators using the exponential of the generators, for instance $P^t = e^{(A_1+A_2)t}$, where the generators $A_1$ and $A_2$ are finite dimensional approximations of $\mathcal{A}_1$ and $\mathcal{A}_2$ associated with the vector field $g$ and $h$. The computation of $e^{(A_1+A_2)t}$ using scaling and squaring method; however the errors for this calculation can be unknown for general matrices $A_1$ and $A_2$ \cite{moler_siam2003, higham_siam2005} that are associated with the vector fields $g$ and $h$. Figure \ref{fig:compare_exp} shows the propagation of the density functions shown in fig. \ref{fig:rotlet} and fig. \ref{fig:pendulum} by $P^t = e^{(A_1+A_2)t}$. The density propagated through the computation of the matrix exponential shows significant errors compared to the density propagation by $\frac{1}{2}\left(P_1^tP_2^t + P_2^tP_1^t \right)$ or compared to the direct numerical simulations of a blob of initial conditions in the third column of fig. \ref{fig:rotlet} or the first column of fig. \ref{fig:pendulum}. The matrix exponential was calculated using MATLAB's matrix exponential function which uses the scaling and squaring method.

\begin{figure}[!h]
\begin{center}
    \includegraphics[width=0.8\linewidth]{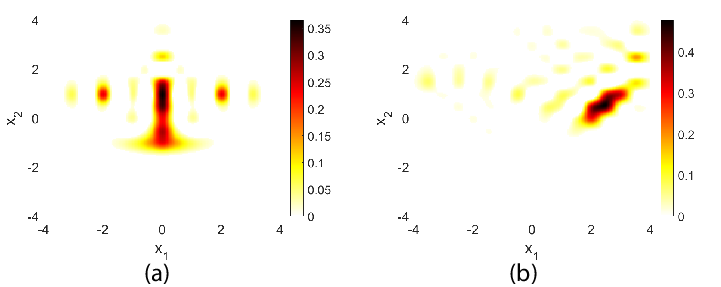}    
    \caption{The propagation of the density functions shown in fig. \ref{fig:rotlet} and fig. \ref{fig:pendulum} at $t=1$ by $P^t = e^{(A_1+A_2)t}$. (a) Density propagated by the generators due to the two rotlets vector fields, $g$ and $h$ in \eqref{eq:rotlet}, (b)  Density propagated by the generators due to the vector fields $g$ and $h$ in \eqref{eq:pen_approx}.}
    \label{fig:compare_exp}
    \end{center}
\end{figure}
\section{Conclusion}
The current methods for the calculation of the Koopman and Perron Frobenius operators for a dynamical (or control system) require significant computations with a large number of initial conditions. When the dynamical system undergoes even small changes, due to a change of model parameters, or due to changes in the model itself, these computations have to be repeated. The results in this paper provide a framework to approximate the PF and Koopman operators from a library of operators associated with a class of vector fields. This framework is agnostic to the specific technique of calculating the Koopman or PF operators; the EDMD method is used in this paper to demonstrate key results, but any other method to calculate the operators could be used. The error in the approximation increases with time according to \eqref{eq:P_cubic}. Future work can address the improvements to the accuracy for longer time horizons. The analysis and control of complex dynamical systems using the Koopman and Perron Frobenius operators are becoming increasingly popular especially in robotics, especially where the dynamics of a robot are complex or the interaction with the environment is complex. In such applications, while a short horizon prediction of an observable or density is sufficient, model parameter changes or changes to environmental inputs such as hydrodynamic forces on a swimming robot or ground contact force due changes in terrain on a legged or wheeled robot impose a significant cost of recomputing the operators. It is in such applications that the results of this paper can be expected to have the most significant impact.




\bibliographystyle{unsrt}

\begin{thebibliography}{10}

\bibitem{bruder_RAL_2021}
Daniel Bruder, Xun Fu, and Ram Vasudevan.
\newblock Advantages of bilinear koopman realizations for the modeling and control of systems with unknown dynamics.
\newblock {\em IEEE Robotics and Automation Letters}, 6(3):4369--4376, 2021.

\bibitem{haggerty_science_2023}
D.~A. Haggerty, M.~J. Banks, E.~Kamenar, P.~C.~Curtis A.~B.~Cao, I.~Mezic, and E.~W. Hawke.
\newblock Control of soft robots with inertial dynamics.
\newblock {\em Science Robotics}, 8(81), 2023.

\bibitem{shi_2024}
Lu~Shi, M.~Haseli, G.~Mamakoukas, D.~Bruder, I.~Abraham, T.~Murphey, and K.~Karydis J.~Cortes.
\newblock Koopman operators in robot learning, 2024.

\bibitem{korda_ifac_2018}
Milan Korda, Yoshihiko Susuki, and Igor Mezić.
\newblock Power grid transient stabilization using koopman model predictive control.
\newblock {\em IFAC-PapersOnLine}, 51(28):297--302, 2018.

\bibitem{froyland_prl_2007}
G.~Froyland and A.M.~Treguier K.~Padberg, M.H.~England.
\newblock Detection of coherent oceanic structures via transfer operators.
\newblock {\em Physical Review Letters}, 98(22):224503, 2007.

\bibitem{froyland2010transport}
Gary Froyland, Naratip Santitissadeekorn, and Adam Monahan.
\newblock Transport in time-dependent dynamical systems: Finite-time coherent sets.
\newblock {\em Chaos: An Interdisciplinary Journal of Nonlinear Science}, 20(4):043116, 2010.

\bibitem{rowley_mezic_KMD_2009}
Clarence~W Rowley, Igor Mezi{\'c}, Shervin Bagheri, Philipp Schlatter, and Dan~S Henningson.
\newblock Spectral analysis of nonlinear flows.
\newblock {\em Journal of fluid mechanics}, 641:115--127, 2009.

\bibitem{mezic_afm_2013}
Igor Mezić.
\newblock Analysis of fluid flows via spectral properties of the koopman operator.
\newblock {\em Annual Review of Fluid Mechanics}, 45(Volume 45, 2013):357--378, 2013.

\bibitem{korda2018linear}
Milan Korda and Igor Mezi{\'c}.
\newblock Linear predictors for nonlinear dynamical systems: {K}oopman operator meets model predictive control.
\newblock {\em Automatica}, 93:149--160, 2018.

\bibitem{vaidya_cdc_2020}
Yiqiang Han, Wenjian Hao, and Umesh Vaidya.
\newblock Deep learning of koopman representation for control.
\newblock In {\em IEEE Conference on Decision and Control (CDC)}, pages 1890--1895, 2020.

\bibitem{brunton_2021}
Eurika Kaiser, J~Nathan Kutz, and Steven~L Brunton.
\newblock Data-driven discovery of {Koopman} eigenfunctions for control.
\newblock {\em Machine Learning: Science and Technology}, 2(3):035023, jun 2021.

\bibitem{korda_mezic_2017}
M.~Korda and I.~Mezic.
\newblock On convergence of extended dynamic mode decomposition to the koopman operator.
\newblock {\em Journal of Nonlinear Science}, 28:687--710, 2018.

\bibitem{brunton22_modernkoopman}
Steven~L. Brunton, Marko Budi\v{s}i\'{c}, Eurika Kaiser, and J.~Nathan Kutz.
\newblock Modern {K}oopman theory for dynamical systems.
\newblock {\em SIAM Review}, 64(2):229--340, 2022.

\bibitem{kirillov_2017}
A.~Kirillov.
\newblock {\em An Introduction to Lie Groups and Lie Algebras}.
\newblock Cambridge Studies in Advanced Mathematics. Cambridge University Press, 2017.

\bibitem{bt_cdc_2023}
J.~Buzhardt and P.~Tallapragada.
\newblock Controlled density transport using perron frobenius generators.
\newblock In {\em 62nd Conference on Decision and Control}, pages 651--656. IEEE, 2023.

\bibitem{bt_pof_2024}
Jake Buzhardt and Phanindra Tallapragada.
\newblock Controlled density transport by microrotors in a stokes flow using linear transfer operators.
\newblock {\em Physics of Fluids}, 36:032021, 2024.

\bibitem{bt_physicad_2025}
Jake Buzhardt and Phanindra Tallapragada.
\newblock Controlled fluid transport by the collective motion of microrotors.
\newblock {\em Physica D: Nonlinear Phenomena}, 472:134536, 2025.

\bibitem{Froyland2024}
Jason Atnip, Gary Froyland, and Peter Koltai.
\newblock An inflated dynamic laplacian to track the emergence and disappearance of semi-material coherent sets.
\newblock 2024.

\bibitem{Froyland2020}
Gary Froyland, Peter Koltai, and Martin Stahn.
\newblock Computation and optimal perturbation of finite-time coherent sets for aperiodic flows without trajectory integration.
\newblock {\em {SIAM} Journal of Applied Dynamical Systems}, 19(3):1659--1700, 2020.

\bibitem{goswami_paley_2017}
Debdipta Goswami and Derek~A Paley.
\newblock Global bilinearization and controllability of control-affine nonlinear systems: A koopman spectral approach.
\newblock In {\em 2017 IEEE 56th Annual Conference on Decision and Control (CDC)}, pages 6107--6112. IEEE, 2017.

\bibitem{rowley_siam_2020}
Sebastian Peitz, Samuel~E. Otto, and Clarence~W. Rowley.
\newblock Data-driven model predictive control using interpolated koopman generators.
\newblock {\em {SIAM} Journal of Applied Dynamical Systems}, 19(3), 2020.

\bibitem{rowley_siam_2024}
Samuel~E. Otto, Sebastian Peitz, and Clarence~W. Rowley.
\newblock Learning bilinear models of actuated koopman generators from partially observed trajectories.
\newblock {\em {SIAM} Journal of Applied Dynamical Systems}, 23(1), 2024.

\bibitem{lasota_1994}
A.~Lasota and M.~C. Mackey.
\newblock {\em Chaos, Fractals, and Noise : Stochastic Aspects of Dynamics}.
\newblock Springer, 1994.

\bibitem{cvitanovic_chaosbook}
Predrag Cvitanovic, Roberto Artuso, Ronnie Mainieri, Gregor Tanner, G{\'a}bor Vattay, Niall Whelan, and Andreas Wirzba.
\newblock Chaos: classical and quantum.
\newblock {\em ChaosBook. org (Niels Bohr Institute, Copenhagen 2005)}, 69:25, 2005.

\bibitem{trotter_1959}
H.~F. Trotter.
\newblock On the product of semigroups of operators.
\newblock {\em Proceedings of the American Mathematical Society}, 10:545--551, 1959.

\bibitem{klus_jcd_2016}
Stefan Klus, Péter Koltai, and Christof Schütte.
\newblock On the numerical approximation of the {P}erron-{F}robenius and {K}oopman operator.
\newblock {\em Journal of Computational Dynamics}, 3(1):51--79, 2016.

\bibitem{ainley_jcp_2008}
Josephine Ainley, Sandra Durkin, Rafael Embid, Priya Boindala, and Ricardo Cortez.
\newblock The method of images for regularized stokeslets.
\newblock {\em Journal of Computational Physics}, 227(9):4600--4616, 2008.

\bibitem{cortez_jcp_2015}
Ricardo Cortez.
\newblock Regularized stokeslet segments.
\newblock {\em Journal of Computational Physics}, 375:783--796, 2018.

\bibitem{moler_siam2003}
C.~Moler and C.~Van Loan.
\newblock Nineteen dubious ways to compute the exponential of a matrix, twenty-five years later.
\newblock {\em SIAM REview}, 45(1), 2003.

\bibitem{higham_siam2005}
N.~J. Higham.
\newblock The scaling and squaring method for the matrix exponential method for the matrix exponential.
\newblock {\em SIAM Journal of Matrix Analysis}, 26(4):1179--1193, 2005.

\end{thebibliography}

\end{document}